\title{No-faster-than-light-signaling implies linear evolutions. A re-derivation}
\author{Angelo Bassi\footnote{\texttt{bassi@ts.infn.it}}}
\affil{Department of Physics, University of Trieste, Strada Costiera 11, 34151 Trieste, Italy}
\affil{Istituto Nazionale di Fisica Nucleare, Trieste Section, Via Valerio 2, 34127 Trieste, Italy}
\author{Kasra Hejazi\footnote{\texttt{hejazi\_kasra@physics.sharif.edu}}}
\affil{Department of Physics, Sharif University of Technology, P.O. Box 11155-9161, Tehran, Iran}
\date{}
\begin{document}
\maketitle

\begin{abstract}
There is a growing interest, both from the theoretical as well as experimental side, to test the validity of the quantum superposition principle, and of theories which explicitly violate it by adding nonlinear terms to the Schr\"odinger equation. We review the original argument elaborated by N. Gisin~\cite{gisin}, which shows that the non-superluminal-signaling condition implies that the dynamics of the density matrix must be linear. This places very strong constraints on the permissible modifications of the Schr\"odinger equation, since they have to give rise, at the statistical level, to a linear evolution for the density matrix. The derivation is done in a heuristic way here and is appropriate for the students familiar with the Textbook Quantum Mechanics and the language of density matrices.
 \end{abstract}

\section{Introduction}

Quantum Mechanics is a very successful physical theory. It has been revolutionary in many diverse areas, from particle physics to condensed matter and statistical physics, explaining many features of nature in a unified framework. 

Quantum Mechanics has introduced many new concepts to physics. Historically it came into play when Max Planck first assumed that the energy of the standing waves in a cavity can only take discrete values. Then he could explain the frequency distribution of the black-body-radiation. As in this example, one of the very first applications of the theory was to make quantities, especially energy, discrete; from which the theory gets its name. 

Many other concepts have been introduced by Quantum Theory. As a second example consider the idea of identical particles and indistinguishability; this is crucial in many ways to particle statistics in Statistical Physics. For instance, one cannot track down one electron, among many, and talk about its definite state, an idea which was almost impossible in non-Quantum way of thinking. Just to name some other examples from the large list of new concepts due to Quantum Theory, consider basic and at the same time important notions like spin, tunneling and the peculiar non-locality.

Besides new concepts, Quantum Mechanics has great power in predicting quantitative results; think of the Anomalous Magnetic Moment of the electron. It can be derived using the formalism of the next revolution, Quantum Field Theory, which was invented because of the need for a Relativistic Quantum Theory. The agreement between the theory and experiment is astonishing: more than ten decimal places.


Nevertheless, over the years several people repeatedly suggested to modify the Schr\"odinger equation, which is the fundamental dynamical law for quantum systems. One can identify three main reasons to do so. 

The first and most widespread reason is the so-called measurement process. Quantum mechanics predicts that there are physical states corresponding to the superposition of two other states. Such superpositions are evident experimentally in the microscopic world, but are absent in our everyday experience of the macro-world. One does not see anything in a superposition of eigenstates of position like: $[|\text{I am in Tehran}\rangle + |\text{I am in Trieste}\rangle]/\sqrt{2}$. This is especially important when dealing with macroscopic systems interacting with microscopic ones, i.e.~in measurements: in such cases the dynamics as given by the Schr\"odinger equation  immediately predicts the formation of macroscopic superpositions corresponding to two or more possible outcomes of the measurement.

To avoid superposition in macroscopic physics, the standard approach is to assume that a collapse of the state vectors takes place \cite{WeinQ,sakurai,shankar}.  As a result one has a theory which deals with the physical world in two different ways. At the very fundamental microscopic world everything evolves according to the Schr\"odinger equation, but at the stage of macroscopic interactions with the micro-world, the wave function collapses. Even the border between these two kinds of evolutions is ambiguously defined. Moreover, the interaction between the measurement apparatus and the system can be tracked down to the fundamental interactions between the particles of the apparatus and the system, which are described by the linear Schr\"odinger equation. But it is not possible to derive the collapse from such a scheme.\cite{DRM,measure}

There are several ways one can get out of this trouble. One example is \textit{Bohmian Mechanics} \cite{Bohma,Bohmb,BohmHiley,Bub, DurrGold,Durretal,Durrt,Holland}, which essentially does not contain collapse. Despite having a wave function, particles have definite locations in this theory. A measurement is simply an interaction between the measurement apparatus and the system, without any mystery. 
Another way out, which is a very natural thing to do, is to modify Schr\"odinger equation in order to have one dynamical evolution both for the micro- and the macro-world. The new equation should, approximately, correspond to the old Schr\"odinger evolution for microscopic systems, and should yield the collapse of the wave function in the macroscopic domain. Collapse Models are examples of these kinds of modifications \cite{DRM,grw,csl,diosi,sc1,rmp,Weinberg}. The dynamics is modified by adding non-linear and stochastic terms to the usual Schr\"odinger equation.

A second motivation for modifying the standard quantum dynamics comes from Quantum Field Theory. This is a linear theory (again, the superposition principle) and it is quite natural to think that this is only the first order approximation of a deeper level non-linear theory. The paradigmatic example is Newtonian gravity, which now we understand as the weak field limit of General Relativity. S. Weinberg first suggested such a possibility \cite{weinl}.

As a third motivation, there is the longstanding problem of the unification of quantum and gravitational phenomena. While waiting for a fully consistent and successful unified theory, one can write phenomenological equations. One such equations is the so-called Schr\"odinger-Newton equation, which is non-linear \cite{sn1,sn2,sn3}. Different parts of the wave function of a system interact among themselves through the Newtonian potential. 

All these efforts raise an important question: `Are there limits in modifying the Schr\"odinger equation?' The answer is yes. As Gisin convincingly showed~\cite{gisin}, {\it whichever way one modifies the Schr\"odinger equation for the wave function, the time evolution for the density matrix has to be linear if one wants to keep no-faster-than-light-signaling.} Here we review this argument with the purpose of giving a pedagogical introduction to the subject.

\section{Evolution of the ensembles}

We work in a standard quantum framework. The {\it states} of  any given physical system are described by the elements $|\psi\rangle$ of a Hilbert space $\mathscr{H}$. We want also to give a statistical description of the system, for which it is convenient to use an {\it ensemble (or mixture) of states.} One can identify any such ensemble with a set of state vectors and the corresponding probabilities, i.e.~$\{ d_i;|\psi_i\rangle \}$. The idea is that the system is in any of such states $|\psi_i\rangle$, but we do not know which one (classical ignorance). We only know the probabilities $d_i$ for the system to be in the states $|\psi_i\rangle$. (figure \ref{ensemble}) 

\begin{figure}
        \centering
        \begin{subfigure}[h]{0.48\textwidth}
                \centerline{\includegraphics[scale=0.7]{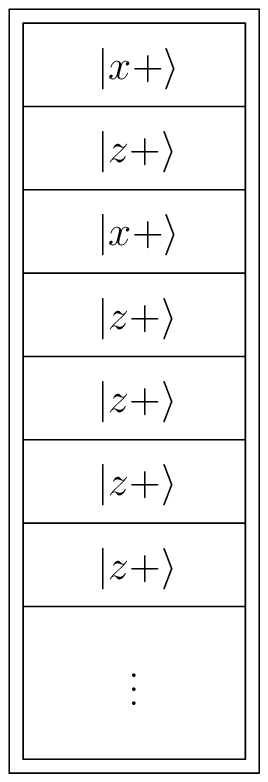}}
                \caption{}\label{ensemble_a}
        \end{subfigure}
        \begin{subfigure}[h]{0.48\textwidth}
                \centerline{\includegraphics[scale = 0.7]{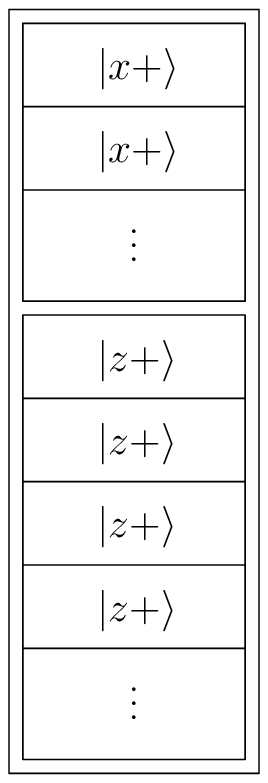}}
                \caption{}\label{ensemble_b}
        \end{subfigure}
      
        \caption{\label{ensemble} \footnotesize
         (\subref{ensemble_a}) A statistical ensemble, consisting of the states $|z+\rangle$ and $|x+\rangle = \tfrac{1}{\sqrt{2}} \left( |z+\rangle + |z-\rangle \right)$, with probabilities $2/3$ and $1/3$ respectively. Note that the population of the states in the ensemble is proportional to the probabilities.		
        (\subref{ensemble_b}) The same ensemble re-ordered. The ensemble now consists of two pure sub-ensembles.The second sub-ensemble is populated twice as the first sub-ensemble, and so, the probability of the state $|z+\rangle$ is twice as $|x+\rangle$.}
\end{figure}

The density matrix formalism is the appropriate tool for dealing with statistical mixtures. For a given mixture $\{ d_i;|\psi_i\rangle \}$, the {\it density matrix} is defined as $\rho \equiv  \sum_i  d_i |\psi_i\rangle\langle\psi_i|$. A density matrix is {\it pure} if there is only one element in the ensemble (with probability 1), and is {\it mixed} otherwise. 

We name the set of all density matrices as $\mathfrak{B}^+$, and the set of pure ones as $\mathfrak{B}^p$. 

\theoremstyle{definition}
\newtheorem*{mix}{Definition}

\begin{mix}
Two mixtures $\{d_i;|\psi_i\rangle \}$ and $\{ p_i;|\phi_i\rangle \}$ are equivalent if their corresponding density matrices are the same:
\begin{equation}
\sum_i  d_i |\psi_i\rangle\langle\psi_i| = \sum_i  p_i |\phi_i\rangle\langle\phi_i|.
\end{equation}
One can easily see that the equivalence of the mixtures is an equivalence relation.
\end{mix}

Let us consider a given dynamical evolution for the state vectors of a given system, which in principle has nothing to do with the Schr\"odinger equation; in particular, it might be non-linear\footnote{A generic non-linear operator acting on $|\psi\rangle$ will be shown as $O(|\psi\rangle)$, while for a linear operator we will use the notation $O|\psi\rangle$. Note that this will not be done for operators acting on density matrices, these operators in this paper are either linear or just acting on pure states. Moreover, we assume that the evolution of any initial state depends only on the state itself and not on the particular way the state is prepared.\cite{cavalcanti}}: 
\begin{equation}\label{1}
|\psi_t \rangle=\mathscr{T}_{(t,t_0)}(|\psi_{t_0}\rangle).
\end{equation}

This dynamical law automatically (and trivially) defines a dynamics also for {\it pure} density matrices: 
\begin{equation}\label{16}
\rho_{t_0}=|\psi_{t_0}\rangle\langle\psi_{t_0}| \to \rho_t=|\psi_t\rangle\langle\psi_t|,
\end{equation}
where $|\psi_t\rangle$ is given by equation \eqref{1}. So we have the following map defined on the space $\mathfrak{B}^p$ of pure density matrices from time $t_0$ to time $t$:
\begin{align} \label{2}
\mathscr{E}_{(t,t_0)} :\mathfrak{B}^p\to \mathfrak{B}^p\\  \nonumber
\rho_t = \mathscr{E}_{(t,t_0)} (\rho_{t_0}).
\end{align}

\begin{figure}[!h]
\centerline{\includegraphics[scale =0.7]{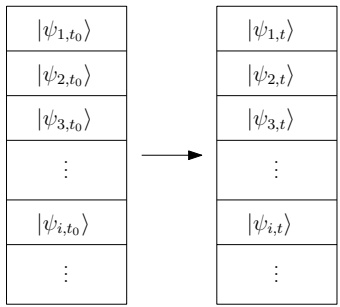}}
\caption{\footnotesize\label{ensemble_evolve}
The evolution of an ensemble; each single state vector evolves independently from the others. Therefore, each pure sub-ensemble evolves according to~\eqref{2}. Note that the statistical weights, being proportional to the number of the copies of each state vector, remain constant.}
\end{figure}

With the dynamics of pure ensembles known, there remains the question of how {\it mixed} ensembles evolve. One expects each of the states in the mixed ensemble to evolve according to evolution \eqref{1}, and independent of the other states. (look at figure~\ref{ensemble_evolve}) In other words, different pure sub-ensembles of an ensemble do not interact, i.e.~each one evolves independently according to equation \eqref{2}. Therefore a generic density matrix evolves in the following way:
\begin{align}\label{17}
\sum_i d_i |\psi_{i,t_0}\rangle\langle\psi_{i,t_0}| & \to  \sum_i d_i |\psi_{i,t}\rangle\langle\psi_{i,t}|  \\ \nonumber
&=\sum_i d_i \left[\mathscr{E}_{(t,t_0)}(|\psi_{i,t_0}\rangle\langle\psi_{i,t_0}|)\right],  \nonumber
\end{align}

This suggests the possibility of extending the map  $\mathscr{E}_{(t,t_0)}$ from the space of pure density matrices to the space of all density matrices:
\begin{align}\label{15}
\mathscr{E}_{(t,t_0)} :\mathfrak{B}^+\to \mathfrak{B}^+\\  \nonumber
\rho_t=\mathscr{E}_{(t,t_0)}(\rho_{t_0}).
\end{align}
However, in general this is not possible, because two different decompositions of the same initial density matrix $\rho_{t_0}$ may correspond to different density matrices after evolution. (figure~\ref{equivalent_ensembles}) In other words one density matrix can have different destinies according to the ensemble it is representing. This means that the map in~\eqref{15} is not well defined in general.

\begin{figure}
        \centering
        \begin{subfigure}[!h]{0.48\textwidth}
                \centerline{\includegraphics[scale = .7]{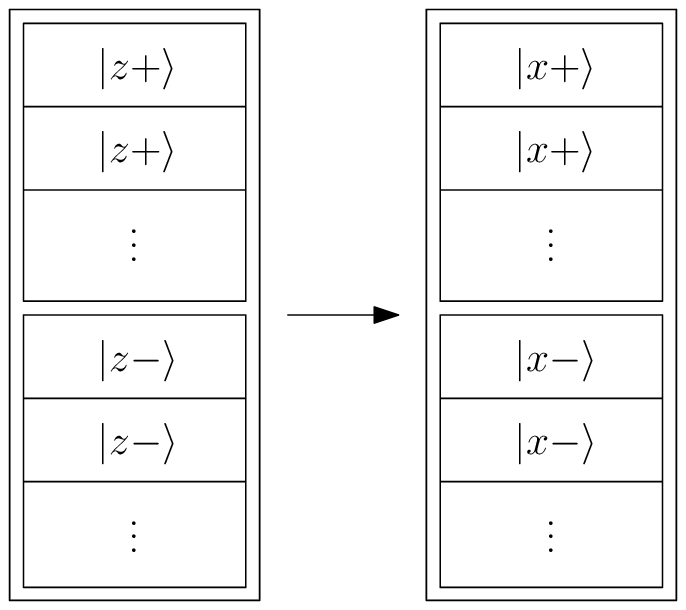}}
                \caption{}\label{evolve_z}
        \end{subfigure}
        \begin{subfigure}[!h]{0.48\textwidth}
                \centerline{\includegraphics[scale = .7]{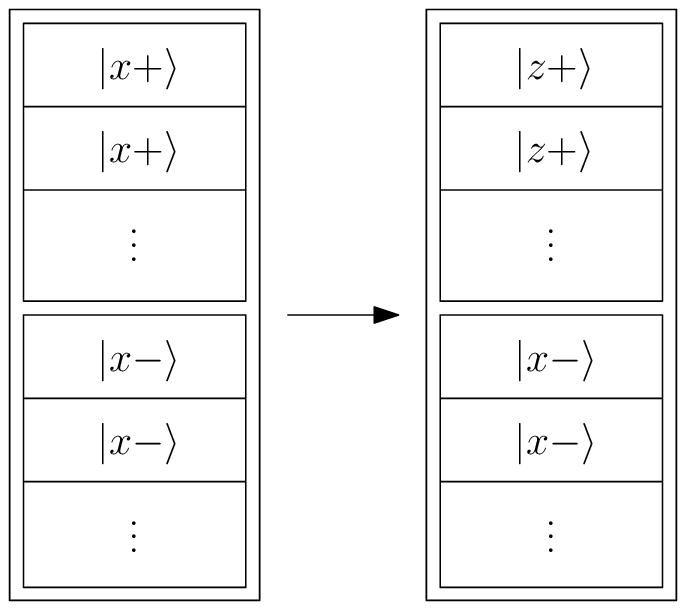}}
                \caption{}\label{evolve_x}
        \end{subfigure}
      
        \caption{\label{equivalent_ensembles} \footnotesize
         (\subref{ensemble_a}) An ensemble of states $\ket{z+}$ and $\ket{z-}$, with $1/2$ probabilities. The density matrix for this ensemble is $\rho^a_{t_0} = \frac{1}{2} \ket{z+}\bra{z+} + \frac{1}{2} \ket{z-}\bra{z-} = \frac{1}{2} I$. Each state evolves according to evolution~\eqref{1}. We assume that the fate of $\ket{z+}$ is $\ket{x+}$, and that the fate of $\ket{z-}$ is $\ket{x-} = \frac{1}{\sqrt{2}}(\ket{z+}+ \ket{z-})$. Hence, the final mixture consists of $\ket{x+}$ and $\ket{x-}$ with probability $1/2$ for each. The final density matrix is  $\rho^a_{t} = \frac{1}{2} \ket{x+}\bra{x+} + \frac{1}{2} \ket{x-}\bra{x-} = \frac{1}{2} I$. \\
        (\subref{ensemble_b}) An equivalent mixture, consisting of  $\ket{x+}$ and $\ket{x-}$ with equal probabilities ($1/2$). The initial density matrix is $\rho^b_{t_0} = \frac{1}{2} \ket{x+}\bra{x+} + \frac{1}{2} \ket{x-}\bra{x-} = \frac{1}{2} I$. Let us assume that the states evolve as: $\ket{x+} \to \ket{z+}$ and $\ket{x-} \to \ket{x-}$. The latter happens due to a non-linearity in the presumed evolution. The final density matrix will be $\rho^b_{t} = \frac{1}{2} \ket{z+} \bra{z+} + \frac{1}{2} \ket{x-}\bra{x-} \neq \frac{1}{2} I$, which is clearly different from $\rho_t^a$. This is an example of two initially equivalent ensembles, becoming inequivalent, due to the non-linear evolution of their state vectors.}
        
\end{figure}

However, Gisin's theorem states that if one wants to rule out the possibility for superluminal communication, different mixtures which are equivalent have to remain equivalent even after the evolution. So that the map $\mathscr{E}_{(t,t_0)}$ in~\eqref{2} can be really extended to the  whole set of density matrices as in~\eqref{15}. The theorem will be proved in the next section.

One can also think of stochastic evolutions of the state vectors, which is the case in Collapse Models, and see what constraints can be put on this type of the evolution. Gisin's theorem immediately generalizes also to such evolutions. (see Appendix~\ref{A1} for details.)

\section{The theorem}

We will be working with finite-dimensional Hilbert spaces. That is reasonable, because in case of infinite-dimensional spaces one can always work with a finite-dimensional subspace. The finite-dimensional Hilbert space describing the state vectors of the system we will be working with, is $\mathscr{H}$. 

We will use un-normalized vectors, which will make our work easier and our results  simpler to read. So we will change notation. The \textit{square norm of a vector} from now on will show its probability in the ensemble, so that instead of the mixture  $\{ d_i;|\psi_i\rangle \}$ we will be using  $\{|\psi_i\rangle \}$, and:
$$\langle\psi_i|\psi_i\rangle=d_i.$$

We will be dealing with the time evolution of two arbitrary but equivalent mixtures $\{|\psi_i\rangle \}$ and $\{|\phi_j\rangle \}$ throughout, where $i=1,\ldots,n_{\psi}$ ; $j=1,\ldots,n_{\phi}$. Because the two ensembles are equivalent we know that one density matrix $\rho$ describes both, i.e.~$\rho=\sum_i |\psi_i\rangle\langle\psi_i|=\sum_i |\phi_i\rangle\langle\phi_i|$. Now we have all the ingredients to formulate Gisin's argument.

As in a usual Bell-type experiment, there are Alice and Bob, far apart from each other, who make measurements on entangled particles\footnote{For simplicity we will speak of particles, but the two types of systems on which Alice and Bob make measurements can be generic.}.
We use the Hilbert space $\mathscr{H}$ to describe the states of the particles traveling towards Alice's place and use the Hilbert space $\mathscr{K}$ for the ones traveling towards Bob's region. Suppose $\mathscr{K}$ is large enough for the argument to apply. It can be shown that the following state $|V\rangle$ can be shared between the two parties:

\begin{equation}\label{24}
|V\rangle=\sum_i |\psi_i\rangle \otimes |\alpha_i\rangle =\sum_i |\phi_i\rangle \otimes |\beta_i\rangle,
\end{equation}
where $\{|\psi_i\rangle \}$ and $\{|\phi_j\rangle \}$ are the two equivalent mixtures in $\mathscr{H}$ that, we started with, upon which  Alice will make measurements, and $\{|\alpha_i\rangle\}$ and $\{|\beta_i\rangle\}$ are two different orthonormal bases of $\mathscr{K}$ (Bob's system). The preparation of such a state is possible for any arbitrary choice of the equivalent mixtures $\{|\psi_i\rangle \}$ and $\{|\phi_j\rangle \}$. (see Appendix B for a proof and details.)
\bigskip

\begin{figure}[!h]
\centerline{\includegraphics[scale = 0.8]{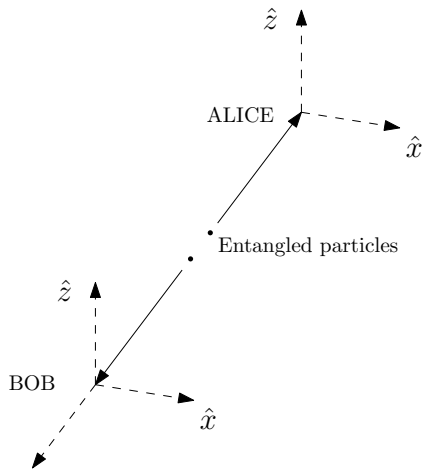}}
\caption{\footnotesize\label{AliceBob}
An ensemble of entangled states $\ket{V}$ is shared between Alice and Bob. Take the state $\ket{V} = \frac{1}{\sqrt{2}} \left( \ket{x+}\otimes\ket{z+} + \ket{x-}\otimes\ket{z-} \right) $ as an example. Only the spin components of the entangled particles have been shown in the state. Suppose Bob decides to measure $S_z$ on his particles. Measuring the $z$-component spin of each particle by Bob, makes the state of its entangled particle in Alice's system definite, due to a collapse: either $\ket{x+}$ if Bob gets $+1$ or $\ket{x-}$ if Bob gets $-1$ in his measurement. So a mixture of states $\ket{x+}$ and $\ket{x-}$ with equal probabilities is prepared for Alice.\\
What if Bob decides to measure $S_x$? One can easily see that the state $\ket{V}$ can be represented as $\ket{V}= \frac{1}{\sqrt{2}} \left( \ket{z+}\otimes\ket{x+} + \ket{z-}\otimes\ket{x-} \right)$ also. Therefore, this time a mixture of $\ket{z+}$ and $\ket{z-}$ with equal probabilities will be prepared. Note that these two kinds of mixtures are equivalent, i.e.~:
$\frac{1}{2} \left\{ \ket{x+}\bra{x+} + \ket{x-}\bra{x-} \right\} = \frac{1}{2} \left\{ \ket{z+}\bra{z+} + \ket{z-}\bra{z-} \right\}.$
}
\end{figure}

Suppose an ensemble of states $|V\rangle$ is shared between Alice and Bob. Bob has two different choices (among the many); either to measure the observable $O_{\alpha}$ which has the  $|\alpha_i\rangle$'s as eigenvectors,  or to measure the observable $O_{\beta}$ which has the $|\beta_i\rangle$'s as  eigenvectors. The first type of measurement  will prepare the ensemble of states $\{|\psi_i\rangle\}$ and the second will prepare the ensemble $\{|\phi_i\rangle\}$ for Alice, as a result of the collapse of the state vector. They correspond to different mixtures, but the density matrix and so all of the expectation values Alice can measure are the same. The two ensembles are equivalent. (figure~\ref{AliceBob}) Notice that Bob can prepare any equivalent ensemble for Alice, by measuring a properly chosen observable, without any need to change the previously shared state respectively. (look at Appendix \ref{A2})
Therefore, this first part of the argument shows that {\it different but equivalent mixtures can be prepared at a distance, by using suitable entangled states and measurements.}

Now the crucial point comes. Suppose these two mixtures $\{|\psi_i\rangle \}$ and $\{|\phi_j\rangle \}$, after some time, become inequivalent, because of the dynamics given by Eq.~\eqref{17}. Then Alice can find out the difference between them by making appropriate measurements and computing averages. Therefore she is in the position to understand which observable Bob decided to measure, even if Bob is arbitrarily far apart.  There is the possibility for superluminal communication, no matter how long it takes for the two mixtures to become appreciably different from each other. Therefore, if we hold on the assumption that there cannot be faster-than-light signaling, {\it different but equivalent  mixtures have to stay equivalent while time passes.} 

The above argument implies that  the evolution map $\mathscr{E}_{(t,t_0)}$  defined in~\eqref{2} can be extended to the whole space of density matrices as in~\eqref{15}, using the rule in~\eqref{17}.
{\it Linearity} of the map is a consequence of using statistical nature of ensembles, i.e.~pure sub-ensembles evolve independently. Suppose $\rho$ is a convex sum of $\rho_1$ and $\rho_2$:
\begin{equation*}
\rho=\lambda\rho_1+(1-\lambda)\rho_2.
\end{equation*}
If a possible ensemble for $\rho_1$ is $\{|a_i\rangle\}$ and one for $\rho_2$ is $\{|b_i\rangle\}$,\footnote{Remember that the norm of the vectors shows their weight in the ensemble.} then one possible ensemble for $\rho$ will be $\{\sqrt{\lambda}|a_i\rangle,\sqrt{1-\lambda}|b_i\rangle\}$. The time evolution for $\rho$ is given by (assuming the extended map~\eqref{15}):

\begin{eqnarray}
\mathscr{E}_{(t,t_0)}\rho & = & \mathscr{E}_{(t,t_0)}\Big(\sum_i \lambda |a_i\rangle\langle a_i|+\sum_i (1-\lambda) |b_i\rangle\langle b_i|\Big) \\
& = & \sum_i \lambda \mathscr{E}_{(t,t_0)}\Big(|a_i\rangle\langle a_i|\Big)+\sum_i (1-\lambda) \mathscr{E}_{(t,t_0)}\Big(|b_i\rangle\langle b_i|\Big)\\
& = &  \lambda \mathscr{E}_{(t,t_0)}\Big(\sum_i  |a_i\rangle\langle a_i|\Big)+(1-\lambda) \mathscr{E}_{(t,t_0)}\Big(\sum_i  |b_i\rangle\langle b_i|\Big)\\ 
& = & \lambda\mathscr{E}_{(t,t_0)}\rho_1+(1-\lambda)\mathscr{E}_{(t,t_0)}\rho_2.
\end{eqnarray}
In the fist line, we simply re-wrote $\rho$ in terms of the statistical mixture; in going from the first to the second line, we used~\eqref{17} applied to $\rho$; in going from the second to the third line, we used again~\eqref{17}, this time applied to the two mixtures $\sum_i |a_i\rangle\langle a_i|$ and $\sum_i |b_i\rangle\langle b_i|$; in going from the third to the forth line, we use the definition of $\rho_1$ and $\rho_2$ in terms of the mixtures defining them.

Now that we know that the evolution of the density matrix is linear, we can make yet further progress by understanding how the state vectors evolve. As is discussed thoroughly in \cite{GhirardiGrassi}, the only deterministic evolutions of the state vectors, which induce linear evolutions at the level of density matrices, are the linear unitary evolutions of the state vectors, i.e.~ the old Schr\"odinger evolutions of the state vectors. As \cite{GhirardiGrassi} further states, in order to obtain a non-Schr\"odinger evolution at the level of state vectors, consistent with no-Signaling, one has to let the state vectors evolve stochastically. We will talk more about this in the Conclusion.

\subsection*{Conclusion}
As stated in the introduction there are attempts for modifying the time evolution in Quantum Mechanics in order to solve the measurement problem. Gisin's argument restricts the possibilities of modifying the evolution equations up to the border of linearity: \textit{The evolution for the density matrix has to be linear}, no matter how the state vector's evolution is, e.g.~deterministic, stochastic, etc. 

According to the work of G. Lindblad \cite{Lin} and of V. Gorini, A. Kossakowsky and E.C.G. Sudarshan \cite{GKS}, a linear evolution for the density matrix can be only of the Lindblad type, if one adds the two further requirements of a quantum-dynamical-semigroup type of equation, and complete positivity:
\begin{equation}\label{Lindblad2}
 \frac{\text d\rho_t}{\text dt} = -i[H,\rho_t] + \sum_{k=1}^n\left(L_k\rho_tL_k^{\dagger} - \frac12 L_k^{\dagger}L_k\rho_t - \frac12 \rho_tL_k^{\dagger}L_k\right).
\end{equation}
The first assumption amounts to requiring a Markovian evolution, which is a reasonable starting point for writing down a dynamical equation (all fundamental equations in physics, thus far, are Markovian; non-Markovian processes become relevant at the statistical level, when phenomenological equations are the only reasonable way for analyzing a complex system). The second assumption becomes necessary when entangled systems are taken into account.

Then, according to \cite{BDH} the only possible way to modify the Schr\"odinger equation for the wave function, which leads to a Lindblad type of equation for the density matrix, is the way collapse models do it. Therefore, given the above premises, one can conclude that {\it the one given by collapse models is the only possible way to modify the Schr\"odinger equation, if one requires no superluminal communication}.

\subsection*{Acknowledgement}
A.B. wishes to thank  the EU project NANOQUESTFIT,  INFN and  the Templeton Foundation project `Experimental and Theoretical Exploration of Fundamental Limits of Quantum Mechanics' for partial financial support. He also  acknowledges support from the COST Action MP1006.
K.H. wants to thank Theoretical Physics Department of Trieste University for hospitality and also INFN for financial support. We both thank Professor Vahid Karimipour for introducing K.H.~to the research group of A.B.~.

\begin{appendix}

\section{}\label{A1}

In this appendix, the possibly stochastic evolutions of the state vectors are discussed.

 In the case of stochastic evolution of a state vector, initial pure states evolve into mixed mixtures. Think of two identical state vectors present in the ensemble, they will in general evolve into different states. One cannot tell what the future of each state is, but can talk about the statistics of the future of identical copies; something which can be determined based on the type of the stochastic evolution involved. (look at figure~\ref{stochastic_evolution}) One expects a pure density matrix $\rho_{t_0}$ to evolve as:
 $$\rho_{t_0}=|\psi_{t_0}\rangle\langle\psi_{t_0}| \to \rho_t= {\mathbb E}\left[|\psi_t\rangle\langle\psi_t|\right],$$
where ${\mathbb E}[\cdot]$ denotes the stochastic average and $\ket{\psi_t}$ has a degree of randomness, despite the fact that $\ket{\psi_{t_0}}$ is known. Accordingly, the map in Eq.~\eqref{2} changes into: 
\begin{align} \label{2b}
\mathscr{E}_{(t,t_0)} :\mathfrak{B}^p\to \mathfrak{B}^+\\  \nonumber
\rho_t =\mathscr{E}_{(t,t_0)} (\rho_{t_0}).
\end{align}

 Now each pure sub-ensemble in a mixture evolves as~\eqref{2b}. Gisin's argument follows.

\bigskip
\begin{figure}[!h]
\centerline{\includegraphics[scale = 0.7]{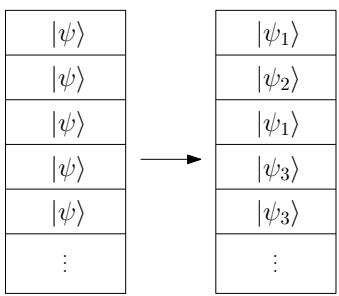}}
\caption{\footnotesize\label{stochastic_evolution}
A pure ensemble getting mixed due to the stochastic evolution. The relative population, again shows the statistical weight in the final ensemble.
}
\end{figure}

\section{}\label{A2}

In this appendix it is shown that a state like the one in \eqref{24} can be shared between the two parties:
\begin{equation*}
|V\rangle=\sum_i |\psi_i\rangle \otimes |\alpha_i\rangle =\sum_i |\phi_i\rangle \otimes |\beta_i\rangle,
\end{equation*}
where  $\{|\psi_i\rangle \}$ and $\{|\phi_j\rangle \}$ are two given equivalent mixtures ($i=1,\ldots,n_{\psi}$ ; $j=1,\ldots,n_{\phi}$.), and there is no further restriction on them. 
Let us start at a basic level and prove this via some lemmas.
The lemmas are presented in a similar fashion to the results of \cite{hughston}. 

We will work in a finite-dimensional space. Any density matrix has a spectral decomposition, i.e.~:  
\begin{equation} \label{20}
\rho=\sum_i |\chi_i\rangle\langle\chi_i|,
\end{equation}
with $ |\chi_i\rangle$'s being orthogonal:
\begin{equation*}
\langle\chi_i|\chi_j\rangle=\lambda_i \delta_{ij} , \qquad 0 < \lambda_i \leq 1.
\end{equation*}

\theoremstyle{plain}

\newtheorem{lem}{Lemma}

\begin{lem}\label{3}
Suppose $\rho$ is the density matrix associated to the ensemble $\{|\psi_i\rangle \}$, then any single vector $|\psi_i\rangle $ can be written as a linear combination of the vectors in the spectral decomposition~\eqref{20} of $\rho$.
\end{lem}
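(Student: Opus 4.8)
The plan is to exploit the positivity of $\rho$ together with the fact that the spectral vectors $\{|\chi_i\rangle\}$ span the support (range) of $\rho$, while their orthogonal complement is precisely the kernel of $\rho$. Saying that every $|\psi_i\rangle$ is a linear combination of the $|\chi_i\rangle$ is the same as saying $|\psi_i\rangle \in \mathrm{span}\{|\chi_j\rangle\}$ for each $i$, so I would instead prove the equivalent containment $\mathrm{span}\{|\psi_i\rangle\} \subseteq \mathrm{span}\{|\chi_j\rangle\}$ by a complement-taking argument, which is clean in finite dimensions.

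First I would pick an arbitrary vector $|\eta\rangle$ in the orthogonal complement of $\mathrm{span}\{|\chi_j\rangle\}$, i.e.~$\langle\chi_j|\eta\rangle = 0$ for all $j$. Since the $|\chi_j\rangle$ are the (un-normalized) eigenvectors of $\rho$ spanning its support, such an $|\eta\rangle$ lies in the kernel of $\rho$, so $\rho|\eta\rangle = 0$ and in particular $\langle\eta|\rho|\eta\rangle = 0$.

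The crucial step is to expand this expectation value using the ensemble decomposition $\rho = \sum_i |\psi_i\rangle\langle\psi_i|$ that defines the hypothesis:
\begin{equation*}
0 = \langle\eta|\rho|\eta\rangle = \sum_i \langle\eta|\psi_i\rangle\langle\psi_i|\eta\rangle = \sum_i |\langle\psi_i|\eta\rangle|^2 .
\end{equation*}
This is a sum of non-negative real terms that vanishes, so each term must be zero: $\langle\psi_i|\eta\rangle = 0$ for every $i$. Hence any vector orthogonal to all the $|\chi_j\rangle$ is automatically orthogonal to all the $|\psi_i\rangle$.

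Finally I would pass to orthogonal complements. The implication just established reads $(\mathrm{span}\{|\chi_j\rangle\})^{\perp} \subseteq (\mathrm{span}\{|\psi_i\rangle\})^{\perp}$, and in a finite-dimensional Hilbert space this yields $\mathrm{span}\{|\psi_i\rangle\} \subseteq \mathrm{span}\{|\chi_j\rangle\}$, which is exactly the claim. The point I would be most careful about is justifying that the span of the spectral vectors equals the support of $\rho$, with orthogonal complement equal to the kernel; this is only the finite-dimensional spectral theorem, but it is what legitimizes the step $\langle\chi_j|\eta\rangle = 0 \Rightarrow \rho|\eta\rangle = 0$. The real engine of the proof is the positivity trick, namely that a sum of squared moduli can vanish only if every term does.
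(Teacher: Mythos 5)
Your proof is correct and rests on exactly the same engine as the paper's: evaluating $\langle\eta|\rho|\eta\rangle=0$ for a vector $|\eta\rangle$ orthogonal to the spectral vectors via the ensemble decomposition, and using positivity to force $\langle\psi_i|\eta\rangle=0$ for each $i$. The only difference is presentational — the paper extends $\{|\chi_j\rangle\}$ to an explicit orthogonal basis and shows the extra expansion coefficients $m_{ij}$ vanish, whereas you argue with orthogonal complements of spans; both are the same finite-dimensional argument.
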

\begin{proof}
Take $\{|\chi_1\rangle,\ldots,|\chi_n\rangle\}$ as the spectral decomposition of $\rho$. If they are a basis, the theorem is trivial. Suppose they are not. Expand the set of $\{|\chi_1\rangle,\ldots,|\chi_n\rangle\}$ to an orthogonal basis of the whole Hilbert space by adding $\{|\chi_{n+1}\rangle,\ldots,|\chi_{n'}\rangle\}; n< n'$ to it. Note that this can always be done. 

Any vector $|\psi_i\rangle$ can be written as a linear combination of these basis-vectors: 
\begin{equation}\label{21}
|\psi_i\rangle=\sum_{j=1}^{n'} m_{ij}|\chi_j\rangle,
\end{equation}
where $m_{ij}$'s are complex numbers. Take $j$ such that $n<j\leq n'$.
By the orthogonality of $|\chi_i\rangle$'s:

\begin{align}
\langle\chi_{j}|\rho|\chi_{j}\rangle=\sum_{i=1}^{n} \langle\chi_{j}|\chi_i\rangle\langle\chi_i|\chi_{j}\rangle=0,
\end{align}
but also:

\begin{align}
\langle\chi_{j}|\rho|\chi_{j}\rangle=0 &=\sum_{i=1}^{n_\psi} \langle\chi_{j}|\psi_i\rangle\langle\psi_i|\chi_{j}\rangle \\ \nonumber
&=\sum_{i=1}^{n_\psi} m_{ij} m^*_{ij} \\ \nonumber
&=\sum_{i=1}^{n_\psi} |m_{ij}|^2  .  \\ \nonumber
\end{align}
A sum of non-negative numbers is equal to zero, i.e.~they are all zero. So for every $i$:

\begin{equation}
m_{ij}=0 ; \qquad n<j\leq n'.
\end{equation}
Hence, one can rewrite equation~\eqref{21} as:
\begin{equation}\label{22}
|\psi_i\rangle=\sum_{j=1}^{n} m_{ij}|\chi_j\rangle .
\end{equation}

\end{proof}

\newtheorem{lem2}[lem]{Lemma}

\begin{lem2}\label{5} 
Suppose that the density matrix $\rho$ has the spectral decomposition $\{|\chi_1\rangle,\ldots,|\chi_n\rangle\}$. A set of vectors $\{|\psi_i\rangle \}$ has the same density matrix if and only if 
\begin{equation}\label{8}
|\psi_i\rangle=\sum_{j=1}^n m_{ij} |\chi_j\rangle,
\end{equation} 
with $m_{ij}$ satisfying:
\begin{equation}\label{9}
\sum_{i=1}^{n_\psi} m_{ij}m^*_{ik}=\delta_{jk}.
\end{equation}

\end{lem2}
\begin{proof}
First assume that density matrix of the mixture is $\rho$, so:
\begin{equation}\label{6}
\sum_{i=1}^{n_\psi} |\psi_i\rangle\langle\psi_i|=\sum_{l=1}^n |\chi_l\rangle\langle\chi_l|.
\end{equation}
Equation \eqref{22} takes the form:
$$|\psi_i\rangle=\sum_{j=1}^n m_{ij}|\chi_j\rangle,$$
multiplying by $\langle\chi_k|$ we have:
\begin{equation}\label{7}
\langle\chi_k|\psi_i\rangle=m_{ik}\lambda_k.
\end{equation}
Then multiplying equation \eqref{6} by $\langle\chi_j|$ from left and by $|\chi_k\rangle$ from right we have:
\begin{align}
\sum_i \langle\chi_j|\psi_i\rangle \langle\psi_i|\chi_k\rangle &= \sum_l \lambda_j \delta_{jl} \lambda_k \delta_{kl} \\ \nonumber
&= \lambda_k^2 \delta_{jk}.
\end{align}
And by using \eqref{7} one immediately gets:
$$\sum_i m_{ij}m^*_{ik}=\delta_{jk}.$$
Conversely suppose \eqref{8} and \eqref{9} hold, so we have:
\begin{align*}
\sum_i |\psi_i\rangle\langle\psi_i|  &=\sum_{i,j,k} m_{ij}m_{ik}^* |\chi_k\rangle\langle\chi_j| \\
& =\sum_j |\chi_j\rangle\langle\chi_j|\\
&= \quad \rho .
\end{align*}
 \end{proof}

Consider the columns of $\boldsymbol{m}$ (the matrix having $m_{ij}$ as components, $i$ showing the row and $j$ the column). They are $n$ vectors in $\mathbb{C}^{n_{\psi}}$. According to~\eqref{9} they are orthonormal and so linearly independent. So $n\leq n_{\psi}$ and also the number of columns of $\boldsymbol{m}$ is less than or equal to the number of its rows.

We will need a square matrix instead of the possibly rectangular one above. For that, one can add $n_{\psi}-n$ null vectors to the set of $\{|\chi_1\rangle,\ldots,|\chi_n\rangle\}$ to make them $\{|\chi_1\rangle,\ldots,|\chi_n\rangle,|0\rangle,\ldots,|0\rangle\}$ and add columns to $\boldsymbol{m}$ to make them an orthonormal basis of $\mathbb{C}^{n_{\psi}}$ (the columns were already $n$ orthonormal vectors, look at~\eqref{9}). So the expanded $\boldsymbol{m}$ will be an $n_{\psi}\times n_{\psi}$ square matrix and we will name it $\boldsymbol{M}$. It is easy to see that $\boldsymbol{M}$ is unitary~\eqref{9}. In the matrix notation we will have:
\begin{equation}\label{10}
\begin{pmatrix}
|\psi_1\rangle  \\
|\psi_2\rangle \\
\vdots \\
|\psi_{n_{\psi}}\rangle
\end{pmatrix}
= \quad \boldsymbol{M} \quad
\begin{pmatrix}
|\chi_1\rangle  \\
\vdots \\
|\chi_n\rangle  \\
|0\rangle \\
\vdots \\
|0\rangle
\end{pmatrix}.
\end{equation}
Note that $\boldsymbol{M}$ acts only on the arrays and is not a Hilbert space operator. 

$\boldsymbol{M}$ can be expanded even more. In fact, one can make a square matrix of any size satisfying~\eqref{9} in this way by adding enough zeros to $|\psi_i\rangle$'s and $|\chi_i\rangle$'s and adding components to $\boldsymbol{M}$ without changing the old ones, and making sure it remains unitary.

\newtheorem{lem3}[lem]{Lemma}

\begin{lem3}\label{14}
Consider two sets of vecotrs $\{|\psi_i\rangle \}$ and $\{|\phi_i\rangle \}$ all belonging to the Hilbert space $\mathscr{H}$ and assume that they correspond to the same density matrix, i.e.~$\rho=\sum_i |\psi_i\rangle\langle\psi_i|=\sum_i |\phi_i\rangle\langle\phi_i|$. There will be a Hilbert space $\mathscr{K}$ which has the following property: There is a vector in the tensor product space $\mathscr{H}\otimes\mathscr{K}$ which can be written in this way:
\begin{equation*}
|V\rangle=\sum_i |\psi_i\rangle \otimes |\alpha_i\rangle =\sum_i |\phi_i\rangle \otimes |\beta_i\rangle,
\end{equation*}
where $|\alpha_i\rangle$'s are orthonormal and $|\beta_i\rangle$'s are orthonormal. 
\end{lem3}
\begin{proof}
Let $\{|\chi_1\rangle,\ldots,|\chi_n\rangle\}$ be the spectral decomposition of the density matrix $\rho$. Assume $n_{\psi}\leq n_{\phi}$ without loss of generality. Now as stated before, one can add zeros, if necessary, to the sets of vectors to make the number of members of every set $n_{\phi}$. So from \eqref{10} we have:
\begin{equation*}
|\psi_i\rangle=\sum_{j=1}^{n_{\phi}} M_{ij}|\chi_j\rangle,
\end{equation*}
$\boldsymbol{M}$ is unitary, so:
\begin{equation}\label{11}
|\chi_i\rangle=\sum_{j=1}^{n_{\phi}} M^{\dagger}_{ij}|\psi_j\rangle.
\end{equation}

Let $\mathscr{K}$ be $\mathbb{C}^{n_{\phi}}$, and let $\{|a_i\rangle\}$ be any orthonormal basis for $\mathscr{K}$. Consider the following vector:
\begin{align}\label{12}
|V\rangle &=\sum_i |\chi_i\rangle \otimes |a_i\rangle \\ \nonumber
& =\sum_{i,j} M^{\dagger}_{ij}|\psi_j\rangle \otimes |a_i\rangle \\ \nonumber
& =\sum_j |\psi_j\rangle \otimes |\alpha_j\rangle, \\  \nonumber
\end{align}
where we defined $|\alpha_j\rangle=\sum_i M^{\dagger}_{ij}|a_i\rangle$. Let us see if $|\alpha_i\rangle$'s are orthonormal:
$$\langle\alpha_i|\alpha_j\rangle =\sum_k (M^{\dagger}_{ik})^*M^{\dagger}_{jk}=\sum_k M_{ki}M^*_{kj}=\delta_{ij}.$$

We can do the same thing for $|\phi_i\rangle$'s:
\begin{equation}\label{13}
|V\rangle =\sum_i |\chi_i\rangle \otimes |a_i\rangle=\sum_j |\phi_j\rangle \otimes |\beta_j\rangle,
\end{equation}
with
$$\langle\beta_i|\beta_j\rangle=\delta_{ij}.$$

So from \eqref{12} and \eqref{13}:
\begin{equation}\label{23}
|V\rangle=\sum_i |\psi_i\rangle \otimes |\alpha_i\rangle =\sum_i |\phi_i\rangle \otimes |\beta_i\rangle.
\end{equation}
\end{proof}
 \end{appendix}
 
This concludes the mathematical background necessary to prove Gisin's theorem.

\end{document}